\newtheorem{theorem}{Lemma}
\newtheorem{corollary}{Corollary}
\begin{document}
%
% paper title
% can use linebreaks \\ within to get better formatting as desired
\renewcommand{\thefootnote}{\fnsymbol{footnote}}
\title{Joint Optimization for Security and Reliability in Round-Trip Transmissions for URLLC services 
%\thanks{
%Part of this work, i.e., the optimization problem in~\eqref{problem:single} with a single variable, was presented at 2022 IEEE International Conference on Communications~\cite{Zhu_ICC_2022}.   

 %Y. Zhu, X. Yuan, and Y. Hu are with School of Electronic Information, Wuhan University, 430072 Wuhan, China and   Chair of Information Theory and Data Analytics, RWTH Aachen University, 52074 Aachen, Germany. (Email:$zhu|yuan|hu$@inda.rwth-aachen.de). 

% B. Ai is with School of Electronic and Information Engineering, Beijing Jiaotong University, (Email: $boai$@bjtu.edu.cn).

 %B. Han is with the Division of Wireless Communications and Radio Positioning, RPTU Kaiserslautern-Landau, 67663 Kaiserslautern, Germany (Email: bin.han@rptu.de).

%R. Wang and A. Schmeink is with the Chair of Information Theory and Data Analytics, RWTH Aachen University, 52074 Aachen, Germany (Email: $wang|schmeink$@inda.rwth-aachen.de). 
% }
}
% author names and affiliations
% use a multiple column layout for up to three different
% affiliations
 
\author{
    \IEEEauthorblockN{Xinyan Le$^{1,2}$,  Yao Zhu$^{1,2}$, Yulin Hu$^{1,2}$, Bin~Han$^{3}$}
    \IEEEauthorblockA{$^{1}$ School of Electronic Information, Wuhan University, Wuhan 430000, China}
    \IEEEauthorblockA{$^{2}$ 6G Intelligent Connectivity International Science and Technology Cooperation Center, Hubei, China}
%     \IEEEauthorblockA{$^{3}$ Chair of Information Theory
% and Data Analytics, RWTH Aachen University,  Aachen, 52068, Germany}
\IEEEauthorblockA{$^{3}$ University of Kaiserslautern (RPTU), 67663, Germany}
    % \IEEEauthorblockA{\{zhangsan\}@XXX.com, \{lisi, wangwu\}@XXX.edu.cn, {g.li}@XXX.com}
    \IEEEauthorblockA{\{\href{mailto:2023202121358@whu.edu.cn}{Le}, \href{mailto: yao.zhu@whu.edu.cn}{Zhu}, \href{mailto:yulin.hu@whu.edu.cn}{Hu}\}@whu.edu.cn, %\{\href{mailto:anke.schmeink@isek.rwth-aachen.de}{Schmeink}\}@inda.rwth-aachen.de
    bin.han@rptu.de} % 给人名附上邮箱地址
}

\maketitle
\vspace{-5pt}\par
\begin{abstract}
Physical layer security (PLS) is a potential solution for secure and reliable transmissions in future Ultra-Reliable and Low-Latency Communications (URLLC). This work jointly optimizes redundant bits and blocklength allocation in practical round-trip transmission scenarios. To minimize the leakage-failure probability, a metric that jointly characterizes security and reliability in PLS, we formulate an optimization problem for allocating both redundant bits and blocklength. By deriving the boundaries of the feasible set, we obtain the globally optimal solution for this integer optimization problem. 
To achieve more computationally efficient solutions, we propose a block coordinate descent (BCD) method that exploits the partial convexity of the objective function. Subsequently, we develop a majorization-minimization (MM) algorithm through convex approximation of the objective function, which further improves computational efficiency. Finally, we validate the performance of the three proposed approaches through simulations, demonstrating their practical applicability for future URLLC services.

\end{abstract}
%\vspace*{-0.12cm}
\begin{IEEEkeywords}
Physical layer security, finite blocklength, URLLC, IIoT
\end{IEEEkeywords}
\vspace{-7pt}\par

\section{Introduction}
\label{sec:intro}
\vspace{-2pt}\par

URLLC represents a key technology direction for 6G evolution, enabling real-time performance for Industrial Internet of Things (IIoT) applications such as precision remote control, multi-device collaborative operations, and smart factory scheduling. These scenarios typically involve massive device connectivity and round-trip communications, creating inherent security vulnerabilities. The heterogeneous nature of IIoT networks introduces significant attack surfaces, as malicious devices can exploit network access to eavesdrop on legitimate communications~\cite{V2X_security_intro_2019}. 
This security challenge is particularly critical for URLLC services, which rely on short-packet transmissions that cannot accommodate traditional cryptographic overheads while maintaining strict latency constraints.

PLS offers a promising solution to these challenges. Traditional PLS methods exploit the inherent randomness of wireless channels to create advantages for legitimate users over eavesdroppers. Wyner's seminal work~\cite{Wyner_wiretap_1975} demonstrated that perfect secrecy is achievable by adding redundancy at the transmitter, ensuring eavesdroppers are unable to decode the protected information. 
However, these classical approaches assume infinite blocklength, allowing arbitrary coding gains. This assumption fails for future URLLC systems, which must use short packets to meet strict latency requirements.

In short-packet transmissions, the analysis of PLS changes fundamentally. When the blocklength is finite (FBL regime), decoding errors become inevitable even when operating below Shannon capacity~\cite{Polyanskiy_2010}, requiring probabilistic rather than deterministic security guarantees. Yang et al.~\cite{Yang_wiretap_2019} derived tight bounds on achievable secrecy rates for FBL codes, catalyzing extensive research on rate-based FBL security metrics. However, these approaches poorly capture IIoT requirements for packet-level guarantees that jointly ensure both security and reliability.
Recognizing this gap, a novel metric, leakage failure probability (LFP) is introduced in~\cite{Zhu_LFP_2023} to quantify packet-level security-reliability performance, providing practical guidance for blocklength allocation. However, the optimization framework only considers a single-transmission  scenario. This simplified scenario cannot capture the complexities of practical URLLC applications, which predominantly feature bidirectional, round-trip communications. Extending LFP-based design to these multi-round scenarios presents unresolved challenges in modeling interdependencies between forward and backward transmissions while maintaining computational tractability.

Motivated by the above observations, this work addresses the critical challenge of jointly optimizing redundant bits and blocklength for secure and reliable round-trip transmissions in future URLLC scenarios. We extend the LFP metric to capture both security and reliability requirements in a unified round-trip framework. Our main contributions are:
\begin{itemize}
    \item \textbf{Problem formulation and optimal solution:} We extend the LFP metric to round-trip transmissions and formulate an optimization problem for joint redundancy and blocklength allocation. By deriving explicit boundaries on the feasible region based on reliability and leakage thresholds, we obtain the globally optimal solution via integer programming as a performance benchmark.
    
    \item \textbf{Efficient algorithms via convexity analysis:} We develop two computationally efficient approaches by relaxing integer constraints. First, we propose a BCD method exploiting the partial convexity in each variable. Second, we develop an MM algorithm by establishing joint convexity properties and applying convex approximation, achieving faster convergence.
    
    \item \textbf{Performance validation:} Simulations demonstrate that both proposed algorithms achieve near-optimal performance while significantly reducing computational complexity, making them practical for real-world URLLC applications.
\end{itemize}

The remainder of this paper is organized as follows. Section II constructs the system model and the objective function. Section III formulates the optimization problem and presents the solution method. Section IV evaluates the proposed scheme through numerical results. Section V provides concluding remarks.\vspace{-7pt}\par

\section{Preliminaries}
\label{sec:preliminaries}
\vspace{-8pt}\par
\subsection{System Model}
Consider a scenario, where a machine-type round-trip transmission occurs between a transmitter Alice and a legitimate receiver Bob, with an eavesdropper Eve attempting to intercept the messages. In the forward transmission (indexed by 1), Alice sends a data packet with blocklength \( m_{1} \) containing a message of \( d_{m,1} \) bits, which is received by both Bob and eavesdropped by Eve. After receiving, Bob responds with a backward transmission (indexed by 2) using blocklength \( m_{2} \) and a message of \( d_{m,2}\) bits, which is received by Alice and eavesdropped by Eve. This round-trip communication scenario can be modeled as two Wyner's wiretap channels~\cite{Wyner_wiretap_1975}, one for each transmission direction, where Eve attempts to intercept the messages in both directions. 

To prevent information leakage to Eve, redundant bits must be added according to physical layer security principles. Specifically, \( d_{r,1} \) redundant bits are added for the forward transmission, and \( d_{r,2} \) redundant bits are added for the backward transmission. Typically, the redundant bits differ between the forward and backward transmissions due to asymmetric channel conditions. 

Alice, Bob, and Eve are each equipped with a single antenna. We denote the channel gains by $h_{ab}$ and $h_{ae}$ for transmissions from Alice to Bob and Eve, respectively. Since round-trip channels are not symmetric, we denote the channel gains from Bob to Alice and Eve by $h_{ba}$ and $h_{be}$, respectively. The channel gains are modeled as $h_{i}=\sqrt{\xi_{i}} \hat{h}_{i}$ for $i\in\{ab,ae,ba,be\}$, where $\xi_{i}$ represents the large-scale pathloss, and $\hat{h}_{i}\sim \mathcal{N}(0,1)$ represent the independent and identically distributed small-scale fading. 
The received signal for each transmission link can be expressed as:
\vspace{-5pt}
\begin{equation}
    y_i=\sqrt{p}h_is+n_i, \\[-5pt]
\end{equation}
where $i\in\{ab,ae,ba,be\}$, $p$ is the transmit power, $s$ is the information signal, and $n_{i} \sim \mathcal{CN}(0,\sigma^2_{i})$ is the additive white Gaussian noise for link $i$. The corresponding signal-to-noise ratio (SNR) with perfect channel state information (CSI) is given by:
\vspace{-5pt}
\begin{equation}
\gamma_{i}=\frac{p |{h}_{i}|^2}{\sigma^2_{i}}, \quad i\in\{ab,ae,ba,be\}.\vspace{-4pt}
\end{equation}

\subsection{Physical Layer Security in the {FBL} Regime}\vspace{-2pt}\par
In our system model of round-trip transmission based on FBL, for Gaussian wiretap channels, the perfect secrecy rate is given by~\cite{Schaefer_PLS_overview_2016}:
\vspace{-5pt}
\begin{equation}
    C_{s,1}=C_{ab}-C_{ae},\quad C_{s,2}=C_{ba}-C_{be}.\\[-5pt]
\end{equation}
where $C_{i}=\log_2(1+\gamma_{i})$ are the Shannon capacities of the respective links with $i\in \{ab,ae,ba,be\}$, and $C_{s,1}$ and $C_{s,2}$ represent the perfect secrecy rates for the forward and backward transmissions, respectively. However, in the FBL regime, the achievable secrecy rate with a given error probability of $\bar{\varepsilon}$ and a given information leakage probability $\bar{\delta}$ is given by:
\vspace{-5pt}
\begin{equation}
\begin{aligned}
        r^*_{s,j}=C_{s,j}-\sqrt{\frac{V(\gamma_{b,j})}{m_j}}Q^{-1}(\bar{\varepsilon}_{j})
    -\sqrt{\frac{V(\gamma_{e,j})}{m_j}}Q^{-1}(\bar{\delta}),\\[-5pt]
    \end{aligned}
\end{equation}
where $j \in \{1,2\}$ denotes the forward or backward transmission with $\gamma_{b,1}=\gamma_{ab}$, $\gamma_{e,1}=\gamma_{ae}$, $\gamma_{b,2}=\gamma_{ba}$, and $\gamma_{e,2}=\gamma_{be}$. Here, $Q^{-1}(\cdot)$ is the inverse Q-function with $Q(x)= \int_{x}^{\infty}{\frac{1}{\sqrt{2\pi}}e^{-{\frac{t^2}{2}}}\mathrm{d}t}$, and $V(\gamma)$ is the channel dispersion given by $V(\gamma)=1-(1+\gamma)^{-2}$.
%While the achievable secrecy rate characterizes the average PLS performance for each transmission direction, it may not adequately capture the requirements of our scenario. In practical URLLC applications involving round-trip communications, we are concerned with the performance of a single round of transmissions rather than long-term average behavior. Therefore, we need a metric that jointly characterizes both security and reliability for the whole round of transmissions, which motivates our focus on packet-level performance metrics.\vspace{-3pt}\par
While the achievable secrecy rate characterizes average PLS performance, it inadequately captures practical URLLC requirements. Round-trip applications demand packet-level metrics that jointly characterize security and reliability for individual transmission rounds rather than long-term averages.\vspace{-3pt}\par

\section{Optimization and Analysis}
\label{sec:system analyis}

\subsection{LFP Characterization and Problem Formulation}
\allowdisplaybreaks

To characterize the packet-level security and reliability performance, we first express the achievable secrecy rate for each transmission direction in terms of individual achievable transmission rates. Following~\cite{Yang_wiretap_2019}, the achievable secrecy rate for the forward transmission (index $j=1$) can be decomposed as:
\vspace{-6pt}
\begin{equation}
\begin{aligned}
r^*_{s,1}&=r^*_{ab}-r^*_{ae},\\[-7pt]
\end{aligned}
\end{equation}
where $r^*_{ab}$ and $r^*_{ae}$ represent the achievable transmission rates from Alice to Bob and from Alice to Eve, respectively. Similarly, for the backward transmission (index $j=2$), it holds that $r^*_{s,2}=r^*_{ba}-r^*_{be}$.
% \vspace{-6pt}
% \begin{equation}
% \begin{aligned}
% r^*_{s,2}&=r^*_{ba}-r^*_{be}.\\[-7pt]
% \end{aligned}
% \end{equation}

For each transmission direction $j\in\{1,2\}$, the total number of transmitted bits consists of the message bits $d_{m,j}$ and the redundant bits $d_{r,j}$ for security, i.e., 
\vspace{-6pt}
\begin{equation}
d_j=d_{m,j}+d_{r,j}.\\[-5pt]
\end{equation}
Given the blocklength $m_j$ and the SNR $\gamma_i$ for each link $i\in\{ab,ae,ba,be\}$, the error probability (i.e., the probability that the receiver fails to decode correctly) can be expressed as~\cite{Zhu_LFP_2023}:
\vspace{-7pt}
\begin{equation}
\varepsilon_{i}=Q\left(\sqrt{\frac{m_j}{V(\gamma_{i})}}\left(C(\gamma_{i})-\frac{d_j}{m_j}\right)\ln 2\right).\\[-5pt]
\end{equation}

We define the event $X_i$ as the successful decoding event at the corresponding receiver for link $i$. The probability of successful decoding is $P(X_i=1)=1-\varepsilon_i$, and the error probability is $P(X_i=0)=\varepsilon_i$. A secure and reliable transmission in direction $j$ occurs when the legitimate receiver decodes successfully while the eavesdropper fails to decode. We denote this event by $Z_j$. For the forward transmission ($j=1$), we have:
\vspace{-7pt}
\begin{equation}
P(Z_{1}=1) = P(X_{ab}=1 \cap X_{ae}=0)=(1-\varepsilon_{ab})\varepsilon_{ae}.\\[-7pt]
\end{equation}
Similarly, for the backward transmission ($j=2$):
\vspace{-7pt}
\begin{equation}
P(Z_{2}=1) = P(X_{ba}=1 \cap X_{be}=0)=(1-\varepsilon_{ba})\varepsilon_{be}.\\[-7pt]
\end{equation}

A successful round-trip transmission, denoted by event $Y$, requires both the forward and backward transmissions to be secure and reliable simultaneously, i.e., $Y=Z_1\cap Z_2$. The LFP, defined as the probability that the round-trip transmission fails to be both secure and reliable~\cite{Zhu_LFP_2023}, is given by:
\vspace{-6pt}
\begin{equation}
 \varepsilon_{LF}=1-P(Y=1)=1-(1-\varepsilon_{ab})\varepsilon_{ae}(1-\varepsilon_{ba})\varepsilon_{be}.\\[-6pt]
\end{equation}

Our objective is to minimize the LFP $\varepsilon_{LF}$ by jointly optimizing the redundant bits $d_{r,1}$ and $d_{r,2}$, as well as the blocklength allocations $m_1$ and $m_2$ for the round-trip transmission. We formulate the following integer programming problem:
\vspace{-5pt}
\begin{mini!}[2]
    {_{m_{1},m_{2},{d_{r,1}},{d_{r,2}}}}{\varepsilon_{LF}}
    {\label{problem:single}}{(\mathrm{OP}):}
\addConstraint{\varepsilon_{ab}\leq\varepsilon_{ab}^{\max}, \varepsilon_{ba}\leq\varepsilon_{ba}^{\max} }\label{con:op2}
\addConstraint{\varepsilon_{ae}\geq\varepsilon_{e}^{\max},~ \varepsilon_{be}\geq\varepsilon_{e}^{\max}}\label{con:op3}
    \addConstraint{m_{1}+m_{2}\leq M}\label{con:op4}
    \addConstraint{d_{r,1},d_{r,2}\in \mathbb{N},~ m_{1},m_{2}\in \mathbb{N}^{+}.}\label{con:op5}
\end{mini!}\vspace{-15pt}\par
Constraints~\eqref{con:op2} and~\eqref{con:op3} specify the reliability and leakage thresholds, respectively. Constraint~\eqref{con:op4} limits the total blocklength for the round-trip communication to satisfy the delay requirement, where $M$ is the maximum allowable blocklength. Constraint~\eqref{con:op5} enforces the integer constraints on the optimization variables.

The above problem is an integer optimization problem, which can be solved with integer programming if the feasible set is bounded. 
According to Lemma 1 in~\cite{joint_2024}, the objective function $\varepsilon_{LF}$ is monotonically increasing in $d_{r,1}$ and $d_{r,2}$. Thus, the reliability and leakage thresholds define the feasible range of $d_{r,1}$ and $d_{r,2}$. Therefore, constraints~\eqref{con:op2} and~\eqref{con:op3} can be reformulated as equivalent constraints:

\vspace{-8pt}
\begin{equation}
\left\{
\begin{aligned}
{d_{r,j}} \!&\leq m_j C(\gamma_{b,j}) \!- \!\sqrt{m_j V(\gamma_{b,j})} \ Q^{-1}(\varepsilon_{b,j}^{\max})\!=\!d_{r,j}^{\max}, \\[-4pt]
{d_{r,j}}\! &\geq m_j C(\gamma_{e,j})\! - \!\sqrt{m_j V(\gamma_{e,j})} \ Q^{-1}(\varepsilon_{e}^{\max})\!=\!d_{r,j}^{\min},\\[-8pt]
\end{aligned}
\right.
\end{equation}
where $j \in \{1,2\}$ represents the transmission direction. Thus, (OP) is transformed into an integer optimization problem with a bounded feasible set, which can be solved using integer programming with complexity $O(d_{r,1}^{\max} d_{r,2}^{\max} M^2)$. We can rephrase this problem as:
\begingroup  % 局部生效
\setlength{\abovedisplayskip}{0pt}
\setlength{\belowdisplayskip}{0pt}
%\vspace{-10pt}
\begin{mini!}[2]
    {_{m_{1},m_{2},{d_{r,1}},{d_{r,2}}}}{\varepsilon_{LF}}
    {\label{problem:once}}{(\mathrm{SP1}):}
    \addConstraint{d_{r,1}^{\min}\leq d_{r,1} \leq d_{r,1}^{\max}}\label{con:SP11}
    \addConstraint{d_{r,2}^{\min}\leq d_{r,2} \leq d_{r,2}^{\max}}\label{con:SP12}
    \addConstraint{~\eqref{con:op4}, ~\eqref{con:op5}}\vspace{-5pt}
\end{mini!}
\endgroup
\vspace{-13pt}

where constraints~\eqref{con:SP11} and~\eqref{con:SP12} are the constraints of $d_{r,1},d_{r,2}$ that specified by the leakage thresholds.
\vspace{-5pt}\par
\subsection{Dimensionality Reduction} 
While integer programming provides the globally optimal solution, it suffers from high computational complexity and lacks insights for system design. We therefore develop more efficient approaches by first analyzing the monotonicity properties of $\varepsilon_{LF}$ to reduce the problem from four to three variables. We begin with the following lemma:\vspace{-3pt}
\begin{theorem}
    \label{lemma:start_time0}
Within the feasible set of (SP1), $\varepsilon_{be}$ and $\varepsilon_{ba}$ are decreasing in $m_2$ and increasing in $d_{r,2}$.
\end{theorem}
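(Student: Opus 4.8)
The plan is to work directly with the closed-form error probabilities and exploit the monotonicity of the Q-function. For each backward link $i\in\{ba,be\}$ the error probability is $\varepsilon_i=Q(g_i)$ with argument
\begin{equation}
g_i=\sqrt{\frac{m_2}{V(\gamma_i)}}\left(C(\gamma_i)-\frac{d_2}{m_2}\right)\ln 2,
\end{equation}
where $d_2=d_{m,2}+d_{r,2}$. Since $Q(\cdot)$ is strictly decreasing, the monotonicity of $\varepsilon_i$ in any variable is the exact reverse of the monotonicity of $g_i$ in that variable, so it suffices to analyze $g_i$, treating $V(\gamma_i)$ and $C(\gamma_i)$ as positive constants fixed by the SNRs.

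First I would rewrite the argument in the separated form $g_i=\frac{\ln 2}{\sqrt{V(\gamma_i)}}\bigl(C(\gamma_i)\sqrt{m_2}-d_2/\sqrt{m_2}\bigr)$. For the dependence on $d_{r,2}$ at fixed $m_2$, since $d_{m,2}$ is constant we have $\partial g_i/\partial d_{r,2}=\partial g_i/\partial d_2=-\ln 2/\sqrt{V(\gamma_i)m_2}<0$, so $g_i$ strictly decreases and $\varepsilon_i$ strictly increases in $d_{r,2}$. For the dependence on $m_2$ at fixed $d_2$, differentiating gives $\partial g_i/\partial m_2=\frac{\ln 2}{2\sqrt{V(\gamma_i)}\sqrt{m_2}}\bigl(C(\gamma_i)+d_2/m_2\bigr)$, which is strictly positive; hence $g_i$ increases in $m_2$ and $\varepsilon_i$ decreases. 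Because the two links $ba$ and $be$ enter through arguments of identical structure (differing only in the constants $C(\gamma_i),V(\gamma_i)$), the same computation settles both $\varepsilon_{ba}$ and $\varepsilon_{be}$ at once.

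I do not anticipate a genuine obstacle: the result is essentially the sign of a single derivative. The only point to check is the positivity of the $m_2$-derivative, i.e. $C(\gamma_i)+d_2/m_2>0$; this is immediate because $C(\gamma_i)=\log_2(1+\gamma_i)>0$ for any positive SNR while $d_2,m_2>0$, so no appeal to the feasibility constraints of (SP1) is needed for this step. An even cleaner route is to observe that both summands $C(\gamma_i)\sqrt{m_2}$ and $-d_2/\sqrt{m_2}$ are individually increasing in $m_2$, which avoids the derivative entirely. The lone subtlety is that the statement is phrased over the integer feasible set, so I would state the conclusion for the continuous extension of $\varepsilon_i$ and then note that strict monotonicity restricts to monotonicity across the admissible integer values of $m_2$ and $d_{r,2}$.
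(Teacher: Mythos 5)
Your proof is correct and takes essentially the same approach as the paper's: both reduce the claim to the sign of the derivative of the Q-function argument, combined with the strict monotonicity of $Q(\cdot)$, applied identically to the $ba$ and $be$ links. If anything, your separated form $C(\gamma_i)\sqrt{m_2}-d_2/\sqrt{m_2}$ is the cleaner execution, since the paper's chain-rule version via the auxiliary function $w_i$ carries sign inconsistencies (it defines $w_i$ as the negated argument yet treats $\partial\varepsilon_i/\partial w_i$ as nonpositive) that your direct computation avoids.
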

\begin{proof}
To prove this theorem, we compute the partial derivatives of $\varepsilon_{be}$ and $\varepsilon_{ba}$ with respect to $m_2$ and $d_{r,2}$, respectively.
\vspace{-7pt}
\begin{equation}
\begin{aligned}
\frac{\partial \varepsilon_{i}}{\partial m_2}
&= \ln 2 \frac{\partial \varepsilon_{i}}{\partial w_{i}} \left( \frac{C(\gamma_{i}) - \frac{d_2}{m_2}}{2 \sqrt{m_2 V(\gamma_{i})}} + \frac{d_2}{m_2^2} \sqrt{\frac{m_2}{V(\gamma_{i})}} \right) \leq 0, \\[-3pt]
\end{aligned}
\end{equation}
\vspace{-3pt}
\begin{equation}
\begin{aligned}
\frac{\partial \varepsilon_{i}}{\partial d_{r,2}} 
&= -\ln 2 \frac{\partial \varepsilon_{i}}{\partial w_{i}} \frac{\ln 2}{\sqrt{m_2 V(\gamma_{i})}} \geq 0, \\[-7pt]
\end{aligned}
\end{equation}
\vspace{-3pt}
where $w_i=-(m_2/V(\gamma_i))^{1/2}((C(\gamma_i)-{d_2}/{m_2})\ln 2)$ is an auxiliary function with $i \in \{ba,be\}$.
\end{proof}
\vspace{-2pt}\par
Lemma~\ref{lemma:start_time0} establishes the monotonicity properties of the error probabilities in the backward transmission. These properties are crucial for characterizing the behavior of the objective function. Building upon this lemma, we establish the following corollary that enables variable reduction:\vspace{-3pt}
\begin{corollary}
    \label{lemma:start_time1}
   For the feasible set (SP1), the objective function $\varepsilon_{LF}$ attains its minimum value only when the equality condition $m_1+m_2 \leq M$ holds, i.e., when $m_1+m_2=M$.
\end{corollary}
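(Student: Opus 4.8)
The plan is to recast the minimization of $\varepsilon_{LF}$ as the maximization of the round-trip success probability $P(Y=1)=(1-\varepsilon_{ab})\varepsilon_{ae}(1-\varepsilon_{ba})\varepsilon_{be}$, which factors into a forward part depending only on $(m_1,d_{r,1})$ and a backward part $(1-\varepsilon_{ba})\varepsilon_{be}$ depending only on $(m_2,d_{r,2})$. I would argue by contradiction: suppose a minimizer satisfies $m_1+m_2<M$. Since the variables are integers this means $m_1+m_2\le M-1$, so $m_2'=m_2+1$ is still feasible together with the same $m_1$. I then exhibit a feasible redundancy $d_{r,2}'$ at blocklength $m_2'$ for which the backward factor strictly increases while the forward factor is untouched, so $\varepsilon_{LF}$ strictly decreases, contradicting optimality.

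The crux is the choice of $d_{r,2}'$. Lemma~\ref{lemma:start_time0} alone is inconclusive here: increasing $m_2$ with $d_{r,2}$ held fixed drives both $\varepsilon_{ba}$ and $\varepsilon_{be}$ downward, so $(1-\varepsilon_{ba})$ rises (favorable) while $\varepsilon_{be}$ falls (unfavorable), and the net effect on the product is ambiguous. I would therefore co-vary the two variables so as to hold the coding rate $R=d_2/m_2$ approximately constant when passing from $m_2$ to $m_2'$, i.e. increase $d_{r,2}$ so that $d_2'/m_2'\approx R$. Writing the error terms as $\varepsilon_{ba}=Q\!\big(\sqrt{m_2/V(\gamma_{ba})}\,(C(\gamma_{ba})-R)\ln 2\big)$ and $\varepsilon_{be}=Q\!\big(\sqrt{m_2/V(\gamma_{be})}\,(C(\gamma_{be})-R)\ln 2\big)$, feasibility of (SP1) forces $C(\gamma_{be})<R<C(\gamma_{ba})$ strictly. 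Hence at fixed $R$ the legitimate gap $C(\gamma_{ba})-R>0$ makes the argument grow with $\sqrt{m_2}$, so $\varepsilon_{ba}$ strictly decreases, while the eavesdropper gap $C(\gamma_{be})-R<0$ makes its argument more negative, so $\varepsilon_{be}$ strictly increases. Both factors of $(1-\varepsilon_{ba})\varepsilon_{be}$ thus move favorably and the backward success probability strictly increases.

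With this improving perturbation in hand the contradiction is immediate, and the corollary follows: at any minimizer the budget must be tight, $m_1+m_2=M$. By the symmetric argument for the forward link one may equivalently bump $m_1$; only one of the two is needed. Formally one may phrase the conclusion through $\tilde B(m_2)=\max_{d_{r,2}}(1-\varepsilon_{ba})\varepsilon_{be}$ and the analogous $\tilde F(m_1)$: the above shows each is strictly increasing, and the product of two positive strictly increasing factors under the constraint $m_1+m_2\le M$ is maximized only on the boundary $m_1+m_2=M$.

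The main obstacle is the integer constraint on $d_{r,2}$: the rate $R$ cannot be held exactly constant when $m_2\to m_2+1$, since $d_{r,2}$ must jump by an integer. I would handle this by choosing the integer increment $\Delta d_{r,2}\in\{\lfloor R\rfloor,\lceil R\rceil\}$ that keeps the new rate $d_2'/m_2'$ inside the open window $(C(\gamma_{be}),C(\gamma_{ba}))$; this is possible because both bounds $d_{r,2}^{\min},d_{r,2}^{\max}$ grow with $m_2$, so the feasible window has positive and increasing width and the rounded rate remains strictly between the two capacities. One must then confirm that the strict improvement of both $\varepsilon_{ba}$ and $\varepsilon_{be}$ survives this rounding — the sign of each gap is preserved, so the $\sqrt{m_2}$-scaling still dominates — and that $d_{r,2}'$ respects $d_{r,2}^{\min}\le d_{r,2}'\le d_{r,2}^{\max}$. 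Verifying these two discrete details is the only genuinely technical part of the argument.
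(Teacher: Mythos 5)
Your core argument is the same as the paper's. The paper's proof also fixes $(m_1,d_{r,1})$, absorbs the forward factor into a constant $k=(1-\varepsilon_{ab})\varepsilon_{ae}$, and then --- this is what its otherwise cryptic introduction of $r_2=(d_{r,2}+d_{m,2})/m_2$ and the point $r_2^*$ is doing --- differentiates $\varepsilon_{LF}$ in $m_2$ along the \emph{constant-rate} direction, where both terms of the derivative become non-positive precisely because the feasible rate is bracketed as $C(\gamma_{be})<r_2<C(\gamma_{ba})$. That is exactly your key step: recognizing that Lemma~1 alone (with $d_{r,2}$ held fixed) is inconclusive, and that co-varying $d_{r,2}$ with $m_2$ so the rate stays put makes $\varepsilon_{ba}$ fall and $\varepsilon_{be}$ rise simultaneously, so the backward factor $(1-\varepsilon_{ba})\varepsilon_{be}$ strictly improves and any allocation with $m_1+m_2<M$ can be beaten. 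You phrase this as a discrete $m_2\mapsto m_2+1$ perturbation inside a contradiction; the paper phrases it as monotone decrease of the (relaxed) objective in $m_2$ along that path. Same decomposition, same bracketing fact, same conclusion.

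The one place where you go beyond the paper --- rounding $d_{r,2}'$ to an integer --- is also the one place where your argument as written does not close. With $\Delta$ your integer increment and $R'=(d_2+\Delta)/(m_2+1)$, one has $|R'-R|=|\Delta-R|/(m_2+1)$, so the constant-rate gain in the $Q$-function argument from $m_2$ to $m_2+1$ is $\frac{(C(\gamma_i)-R)\ln 2}{(\sqrt{m_2+1}+\sqrt{m_2})\sqrt{V(\gamma_i)}}$ while the perturbation introduced by rounding is $\frac{|\Delta-R|\ln 2}{\sqrt{(m_2+1)V(\gamma_i)}}$. Both are $\Theta(1/\sqrt{m_2})$, so the ``$\sqrt{m_2}$-scaling'' does \emph{not} dominate. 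Concretely, improving both factors under nearest-integer rounding requires roughly $C(\gamma_{ba})-R>2(\Delta-R)$ when rounding up, or $R-C(\gamma_{be})>2(R-\Delta)$ when rounding down; if both capacity gaps are below about one bit per channel use and the fractional part of $R$ is near $1/2$, neither choice works, so this step needs a genuine argument (or a retreat to the continuous relaxation). To be fair, this gap is inherited from the problem rather than introduced by you: the paper's own proof is entirely continuous --- a real-valued derivative in $m_2$ and a real $r_2^*$ --- even though the corollary is asserted for the integer program (SP1), and it never checks that the improving point is integral. Your write-up is essentially the paper's argument with that difficulty made explicit rather than silently ignored.
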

\begin{proof}
We can fix the variables $m_1,d_{r,1},d_{r,2}$. If only the variable $m_2$ is considered, where $m_2 \in (M - m_1 - 1 ,M)$, $\varepsilon_{LF}$ can be written as:
\vspace{-8pt}
\begin{equation}
\varepsilon_{LF} = 1 - k(1 - \varepsilon_{{ba}}) \varepsilon_{be}\\[-5pt]
\end{equation}
where $k=(1-\varepsilon_{{ab}})\varepsilon_{ae}$ can be regarded as a constant in this proof.
Then, to prove that $\varepsilon_{LF}$ when $m_1 + m_2 = M$, we can prove that $\varepsilon_{LF}$ is monotonically increasing with $m_2$.

According to Lemma~\ref{lemma:start_time0} %gives the expressions of $\frac{\partial \varepsilon_{{ba}}}{\partial m_2}$ and $\frac{\partial \varepsilon_{be}}{\partial m_2}$, from which we know that 
both $\varepsilon_{{ba}}$ and $\varepsilon_{be}$ are decreasing in $m_2$. Since $\gamma_{be} \neq \gamma_{{ba}}$, their decreasing rates are different. The partial derivative of $\varepsilon_{LF}$ in $m_2$ is given by :
\vspace{-5pt}
\begin{equation}
    \frac{\partial \varepsilon_{LF}}{\partial m_2} = k \varepsilon_{be} \frac{\partial \varepsilon_{{ba}}}{\partial w_{{ba}}} \cdot \frac{\partial w_{{ba}}}{\partial m_2} + k (\varepsilon_{{ba}} - 1) \frac{\partial \varepsilon_{be}}{\partial w_{be}} \cdot \frac{\partial w_{be}}{\partial m_2}\\[-5pt]
\end{equation}
Next, we introduce $r_2 = \frac{{d_{r,2}} + d_{m_2}}{m_2}$~\cite{joint_2024}. Since both ${d_{r,2}}$ and $m_2$ are variables, $r_2$ can be made to satisfy $r_2 \in N_+$.

Thus, for any $m_2$, there exists a corresponding $r_2^*$ that satisfies :
\vspace{-5pt}
\begin{equation}
\begin{aligned}
\frac{\partial \varepsilon_{LF}}{\partial m_2} &= k \varepsilon_{be} \frac{\partial \varepsilon_{{ba}}}{\partial w_{{ba}}} \cdot \frac{c(\gamma_{{ba}}) - r_2^* \ln 2}{2\sqrt{m_2 V(\gamma_{{ba}})}} \\[-1pt]
&+ k(\varepsilon_{{ba}} - 1) \frac{\partial \varepsilon_{be}}{\partial w_{be}} \cdot \frac{c(\gamma_{be}) - r_2^* \ln 2}{2\sqrt{m_2 V(\gamma_{be})}} \leq 0\\[-3pt]
\end{aligned}
\end{equation}
So that $\varepsilon_{LF}$ is decreasing in $m_2$. That is, whenever the case $m_1^* + m_2^* < M$ occurs, there always exists $m_2 > m_2^*$  such that $\varepsilon_{LF} < \varepsilon_{LF}^*$. Therefore, $\varepsilon_{LF}$ attains its minimum value only when the equality $m_1 + m_2 = M$ holds.
\end{proof}\vspace{-5pt}\par
Corollary~\ref{lemma:start_time1} reveals that the optimal solution always utilizes the full available blocklength, allowing us to express \( m_2 \) in terms of \( m_1 \) as \( m_2 = M - m_1 \). This analytical finding reduces the dimensionality of the optimization problem from four variables to three, and (SP1) can be reformulated as:
\vspace{-5pt}
\begin{mini!}[2]
    {_{m_{1},d_{r,1},d_{r,2}}}{\varepsilon_{LF}}
    {\label{problem:two}}{(\mathrm{SP2}):}
    \addConstraint{d_{r,1}^{\min}\leq d_{r,1} \leq d_{r,1}^{\max}}\label{con:SP21}
    \addConstraint{d_{r,2}^{\min}\leq d_{r,2} \leq d_{r,2}^{\max}}\label{con:SP22}
    \addConstraint{1 \leq m_{1}\leq M}\label{con:SP23}
    \addConstraint{d_{r,1},d_{r,2}\in \mathbb{N},~ m_{1}\in \mathbb{N}^{+}}\label{con:SP24}
\end{mini!}\vspace{-18pt}\par
where constraints~\eqref{con:SP21} and~\eqref{con:SP22} are the leakage thresholds of $d_{r,1}$ and $d_{r,2}$. Note that constraint~\eqref{con:SP23} is the range of \( m_1 \) after substituting \( m_2 = M - m_1 \). \vspace{-3pt}\par
 \vspace{-5pt}\par
\subsection{Partial Convexity Analysis and Block Coordinate Descent}
To further analyze the convexity properties of the problem, we relax the integer variables to real-valued ones. This relaxation allows us to  to characterize the convexity properties of the objective function. Accordingly, we reformulate (SP2) by relaxing the integer constraints:
\vspace{-7pt}
\begin{mini!}[2]
{_{m_{1},{d_{r,1}},{d_{r,2}}}}{\varepsilon_{LF}}
{\label{problem:three}}{(\mathrm{SP3}):}
\addConstraint{~\eqref{con:SP22},~\eqref{con:SP23},~\eqref{con:SP24}}\label{con:SP31}
\addConstraint{d_{r,1},d_{r,2}\in \mathbb{R}^+,~ m_{1}\in \mathbb{R}^{++}}\label{con:SP32}
\end{mini!}\vspace{-18pt}\par
where constraint~\eqref{con:SP32} relaxes the integer constraints in~\eqref{con:SP24} to real variables.

We now establish the partial convexity of the objective function $\varepsilon_{LF}$ with respect to each variable. These results lay the foundation for an efficient block coordinate descent (BCD) solution approach. We begin with the following theorem:\vspace{-5pt}\par
\begin{theorem}
   With the feasible set of (SP3), the objective function
$\varepsilon_{LF}$ is convex in blocklength $m_1$.
    \label{lemma:start_time1}
\end{theorem}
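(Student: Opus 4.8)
The plan is to exploit the blocklength coupling $m_2 = M - m_1$ supplied by the preceding corollary and treat the objective as a function of the single variable $m_1$, namely $\varepsilon_{LF}(m_1) = 1 - g_1(m_1)\,g_2(M - m_1)$, where $g_1 = (1-\varepsilon_{ab})\varepsilon_{ae}$ and $g_2 = (1-\varepsilon_{ba})\varepsilon_{be}$ are the per-direction secure-and-reliable success probabilities $P(Z_1{=}1)$ and $P(Z_2{=}1)$. Because $d_{r,1}$ and $d_{r,2}$ (and hence $d_1,d_2$) are held fixed while differentiating in $m_1$, convexity of $\varepsilon_{LF}$ is equivalent to concavity of $P(m_1)=g_1(m_1)g_2(M-m_1)$, i.e.\ to $P''\le 0$. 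Expanding,
\[
P'' = g_1''\,g_2 - 2\,g_1'\,g_2' + g_1\,g_2'',
\]
with $g_1',g_1''$ taken in $m_1$ and $g_2',g_2''$ in $m_2$; the task reduces to signing these three terms.

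First I would dispose of the cross term. Lemma~\ref{lemma:start_time0}, together with its forward-direction analogue, gives the monotonicity of each $\varepsilon_i$, and the argument already used to prove the corollary shows that each success probability is nondecreasing in its own blocklength, so $g_1'\ge 0$ and $g_2'\ge 0$; hence $-2g_1'g_2'\le 0$. Since $g_1,g_2\ge 0$ as probabilities, it then suffices to establish concavity of each factor, i.e.\ $g_1''\le 0$ and $g_2''\le 0$.

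To reach the concavity of $g_j$ I would differentiate through the $Q$-function, handling both directions uniformly via $\varepsilon_{b,j},\varepsilon_{e,j}$ (so $\varepsilon_{b,1}=\varepsilon_{ab}$, $\varepsilon_{e,1}=\varepsilon_{ae}$, etc.). Writing $\varepsilon_i = Q(a_i(m))$ with $a_i(m)=\tfrac{\ln 2}{\sqrt{V(\gamma_i)}}\bigl(C(\gamma_i)\sqrt{m}-d_j/\sqrt{m}\bigr)$, a direct computation gives $a_i'>0$, $a_i''<0$ and $\varepsilon_i'' = \tfrac{1}{\sqrt{2\pi}}e^{-a_i^2/2}\bigl(a_i (a_i')^2 - a_i''\bigr)$. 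On the legitimate link the reliability constraint forces $\varepsilon_{b,j}\le\varepsilon_{b,j}^{\max}<\tfrac12$, so $a_{b,j}>0$, whence $\varepsilon_{b,j}''\ge 0$ and $1-\varepsilon_{b,j}$ is concave and decreasing-complemented. Expanding $g_j'' = -\varepsilon_{b,j}''\,\varepsilon_{e,j} - 2\,\varepsilon_{b,j}'\,\varepsilon_{e,j}' + (1-\varepsilon_{b,j})\,\varepsilon_{e,j}''$, the first term is $\le 0$ (as $\varepsilon_{b,j}''\ge 0$, $\varepsilon_{e,j}\ge 0$) and the second is $\le 0$ (as $\varepsilon_{b,j}',\varepsilon_{e,j}'\le 0$).

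The main obstacle is the residual term $(1-\varepsilon_{b,j})\,\varepsilon_{e,j}''$: the eavesdropper error is neither convex nor concave a priori, since the sign of $\varepsilon_{e,j}''$ follows that of $a_{e,j}$. Here I would invoke the leakage constraint $\varepsilon_{e,j}\ge\varepsilon_e^{\max}$, which confines the operating point to the regime where $a_{e,j}$ is non-positive (or small), and then bound this single positive contribution against the two strictly negative terms, using $a_i''<0$ and the ranges of $\varepsilon_{b,j}$ and $\varepsilon_{e,j}'$ fixed by the feasible set of (SP3). Proving this domination inequality — rather than the routine chain-rule differentiation — is where the real difficulty lies; once it yields $g_1'',g_2''\le 0$, the three-term expansion gives $P''\le 0$, hence $\varepsilon_{LF}''\ge 0$ and the claimed convexity in $m_1$.
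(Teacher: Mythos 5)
Your decomposition is sound and in some ways cleaner than the paper's: you split $1-\varepsilon_{LF}=g_1(m_1)\,g_2(M-m_1)$ into per-direction success probabilities and apply the product rule twice, whereas the paper expands the second derivative of the whole four-factor product through logarithmic derivatives, writing $\frac{\partial^2 \varepsilon_{LF}}{\partial m_1^2}=-P\big[(-q_1+q_2-q_3+q_4)^2-\sum_i q_i^2 - \frac{\varepsilon''_{ab}}{1-\varepsilon_{ab}}+\frac{\varepsilon''_{ae}}{\varepsilon_{ae}}-\frac{\varepsilon''_{ba}}{1-\varepsilon_{ba}}+\frac{\varepsilon''_{be}}{\varepsilon_{be}}\big]$ with $q_i$ the log-derivatives of each factor. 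Your cross-term argument ($g_1',g_2'\ge 0$ via the result of Corollary~1, hence $-2g_1'g_2'\le 0$) and your expansion $g_j''=-\varepsilon_{b,j}''\varepsilon_{e,j}-2\varepsilon_{b,j}'\varepsilon_{e,j}'+(1-\varepsilon_{b,j})\varepsilon_{e,j}''$ are both correct, and both routes ultimately reduce to the same atomic inputs: the signs of the first and second derivatives of the four individual error probabilities.

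However, the proposal stops short of a proof at exactly the step you flag yourself: the sign of $(1-\varepsilon_{b,j})\,\varepsilon_{e,j}''$. The paper does not battle this term; it disposes of it by citing Lemma~5 of~\cite{zhu_2025}, which asserts that within the feasible set $\varepsilon_{ab},\varepsilon_{ba}$ are convex and $\varepsilon_{ae},\varepsilon_{be}$ are concave in $m_1$, so every curvature term enters with the right sign and no domination argument is needed. Your plan to recover concavity of the eavesdropper error from the leakage constraint alone does not close this gap: with $\varepsilon_{e,j}=Q(a_{e,j})$ one has $\varepsilon_{e,j}''=\phi(a_{e,j})\big[a_{e,j}(a_{e,j}')^2-a_{e,j}''\big]$, and even when the constraint $\varepsilon_{e,j}\ge\varepsilon_e^{\max}\ge \tfrac12$ forces $a_{e,j}\le 0$, the second contribution $-a_{e,j}''>0$ keeps the overall sign indeterminate — so it is not true that the sign of $\varepsilon_{e,j}''$ "follows that of $a_{e,j}$", and "non-positive or small" is not enough. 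The domination inequality you defer as "where the real difficulty lies" is precisely the mathematical substance of the theorem (and is the content of the external lemma the paper leans on). As written, your proposal is an honest reduction of the claim to that unproven inequality, not a proof of it.
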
\vspace{-9pt}\par
\begin{proof}
According to lemma 5 of~\cite{zhu_2025}, $\varepsilon_{{ab}}$ and $\varepsilon_{{ba}}$ are convex in $m_1$, while $\varepsilon_{ae}$ and $\varepsilon_{be}$ are concave in $m_1$ .So that,$\frac{\partial^2{\varepsilon_{{ab}}}}{\partial m_1^2} \geq 0,$ $\frac{\partial^2{\varepsilon_{{ba}}}}{\partial m_1^2} \geq 0.$ Thus, we can prove that:\vspace{-7pt}
\begin{equation}
\begin{aligned}
\frac{\partial^{2}\varepsilon_{LF}}{\partial m_{1}^{2}}  
&=-{P}\bigg[\underbrace{(-q_{1}+q_{2}-q_{3}+q_{4})^{2}-(q_{1}^{2}+q_{2}^{2}+q_{3}^{2}+q_{4}^{2})}_{\leq 0} \\[-3pt]
 & \qquad \!-\!\underbrace{\frac{\frac{\partial^{2}\varepsilon_{{ab}}}{\partial m_1^{2}}}{1-\varepsilon_{{ab}}}}_{\geq 0}\!+\!\underbrace{\frac{\frac{\partial^{2}\varepsilon_{ae}}{\partial {m_1}^{2}}}{\varepsilon_{ae}}}_{\leq 0}\!-\!\underbrace{\frac{\frac{\partial^{2}\varepsilon_{{ba}}}{\partial {m_1}^{2}}}{1-\varepsilon_{{ba}}}}_{\geq 0}\!+\!\underbrace{\frac{\frac{\partial^{2}\varepsilon_{be}}{\partial {m_1}^{2}}}{\varepsilon_{be}}}_{\leq 0}\bigg] \geq 0\\[-8pt]
\end{aligned}
\end{equation}
where $P=1-\varepsilon_{LF}$, $q_1=\frac{\frac{\partial\varepsilon_{ab}}{\partial {m_1}}}{1-\varepsilon_{ab}}$, $q_2=\frac{\frac{\partial\varepsilon_{ae}}{\partial {m_1}}}{\varepsilon_{ae}}$, $q_3=\frac{\frac{\partial\varepsilon_{ba}}{\partial {m_1}}}{1-\varepsilon_{ba}}$, and $q_4=\frac{\frac{\partial\varepsilon_{be}}{\partial {m_1}}}{1-\varepsilon_{be}}$ are auxiliary functions. Note that according to Theorem~\ref{lemma:start_time0}, we can infer that $\frac{\partial{\varepsilon_{ab}}}{\partial m_1} \geq 0$ and $\frac{\partial{\varepsilon_{ba}}}{\partial m_1} \geq 0$. Since they are all probability functions, we have $0 \leq \varepsilon_i \leq 1$ for $i\in\{ab,ae,ba,be\}$.
\end{proof}\vspace{-7pt}\par
\begin{corollary}
   With the feasible set of (SP3), the objective function $\varepsilon_{LF}$ is convex in $d_{r,1}$ and $d_{r,2}$, respectively.
    \label{corollary:start_time2}
\end{corollary}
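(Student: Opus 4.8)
The plan is to reduce this bivariate claim to two independent univariate convexity arguments, each structurally identical to the proof of the preceding $m_1$-convexity result. Fix $m_1$, $d_{r,1}$, and the backward-direction quantities; because $\varepsilon_{ba}$ and $\varepsilon_{be}$ depend only on $d_2=d_{m,2}+d_{r,2}$ and $m_2$, the factor $k_2=(1-\varepsilon_{ba})\varepsilon_{be}>0$ is constant with respect to $d_{r,1}$, so that $\varepsilon_{LF}=1-k_2(1-\varepsilon_{ab})\varepsilon_{ae}$. Convexity of $\varepsilon_{LF}$ in $d_{r,1}$ is then equivalent to concavity of $(1-\varepsilon_{ab})\varepsilon_{ae}$ in $d_{r,1}$, and by the symmetry of the round-trip model the same computation with $(\varepsilon_{ba},\varepsilon_{be})$ replacing $(\varepsilon_{ab},\varepsilon_{ae})$ settles $d_{r,2}$; hence I treat only $d_{r,1}$. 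The key structural simplification is that the $Q$-function argument $a_i=\sqrt{m_1/V(\gamma_i)}\,(C(\gamma_i)-d_1/m_1)\ln 2$ is \emph{affine} in $d_{r,1}$, with slope $-\beta_i$ where $\beta_i=\ln 2/\sqrt{m_1V(\gamma_i)}>0$. This makes the curvature bookkeeping far cleaner than in the $m_1$ case.

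Next I would run the same logarithmic-derivative manipulation as in the $m_1$ proof. With $P=1-\varepsilon_{LF}=(1-\varepsilon_{ab})\varepsilon_{ae}(1-\varepsilon_{ba})\varepsilon_{be}\geq 0$, differentiating $\ln P=\ln k_2+\ln(1-\varepsilon_{ab})+\ln\varepsilon_{ae}$ twice in $d_{r,1}$, and setting $q_1=\frac{\partial\varepsilon_{ab}/\partial d_{r,1}}{1-\varepsilon_{ab}}$ and $q_2=\frac{\partial\varepsilon_{ae}/\partial d_{r,1}}{\varepsilon_{ae}}$, the forward-only cross term $(-q_1+q_2)^2-q_1^2-q_2^2$ collapses to $-2q_1q_2$ (the two-term analogue of the nonpositive four-term bracket in the $m_1$ proof), yielding
\begin{equation}
\frac{\partial^2\varepsilon_{LF}}{\partial d_{r,1}^2} = -P\left[\underbrace{-2q_1 q_2}_{\leq 0}-\underbrace{\frac{\frac{\partial^2\varepsilon_{ab}}{\partial d_{r,1}^2}}{1-\varepsilon_{ab}}}_{\geq 0}+\underbrace{\frac{\frac{\partial^2\varepsilon_{ae}}{\partial d_{r,1}^2}}{\varepsilon_{ae}}}_{\leq 0}\right].
\end{equation}
Establishing convexity then reduces to showing each term inside the bracket is nonpositive, so that the whole bracket is $\leq 0$ and the leading $-P$ flips it to $\geq 0$.

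The signs come from monotonicity together with the feasibility constraints. Since $Q$ is decreasing and $a_i$ is affinely decreasing in $d_{r,1}$, both $\varepsilon_{ab}$ and $\varepsilon_{ae}$ are increasing in $d_{r,1}$ (the forward analogue of Lemma~\ref{lemma:start_time0}), giving $q_1,q_2>0$ and thus $-2q_1q_2\leq 0$. For the curvature terms, the affine structure and $Q''(x)=x\varphi(x)$ (with $\varphi$ the standard Gaussian density) give $\partial^2\varepsilon_i/\partial d_{r,1}^2=\beta_i^2\,a_i\,\varphi(a_i)$, whose sign equals that of $a_i$, i.e.\ of $\tfrac12-\varepsilon_i$. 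The main obstacle — and the only place the constraints are needed — is pinning down these curvature signs on the feasible set: the reliability constraint $\varepsilon_{ab}\leq\varepsilon_{ab}^{\max}$ forces $\varepsilon_{ab}<\tfrac12$, hence $a_{ab}>0$ and $\partial^2\varepsilon_{ab}/\partial d_{r,1}^2\geq 0$, while the leakage constraint $\varepsilon_{ae}\geq\varepsilon_{e}^{\max}$ forces $\varepsilon_{ae}\geq\tfrac12$, hence $a_{ae}\leq 0$ and $\partial^2\varepsilon_{ae}/\partial d_{r,1}^2\leq 0$, provided the thresholds satisfy the mild operating condition $\varepsilon_{ab}^{\max}\leq\tfrac12\leq\varepsilon_{e}^{\max}$. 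Under this condition all three bracket terms are nonpositive and $P\geq 0$, so $\partial^2\varepsilon_{LF}/\partial d_{r,1}^2\geq 0$; the identical argument for $d_{r,2}$ closes the corollary.
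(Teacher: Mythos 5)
Your proposal is correct and follows essentially the same route as the paper: fix the backward factor $(1-\varepsilon_{ba})\varepsilon_{be}$ as a positive constant, obtain convexity of $\varepsilon_{ab}$ and concavity of $\varepsilon_{ae}$ in $d_{r,1}$ from the sign of the affine $Q$-function argument (the paper's $\chi_{ab}\geq 0$, $\chi_{ae}\leq 0$), and combine these with the positive first derivatives — your log-derivative bracket is exactly the paper's product-rule bracket $\varepsilon_{ae}\frac{\partial^{2}\varepsilon_{ab}}{\partial d_{r,1}^{2}}+2\frac{\partial\varepsilon_{ab}}{\partial d_{r,1}}\frac{\partial\varepsilon_{ae}}{\partial d_{r,1}}-(1-\varepsilon_{ab})\frac{\partial^{2}\varepsilon_{ae}}{\partial d_{r,1}^{2}}$ rescaled by $-1/\bigl((1-\varepsilon_{ab})\varepsilon_{ae}\bigr)$. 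Your only substantive addition is stating explicitly the operating condition $\varepsilon_{ab}^{\max}\leq\tfrac{1}{2}\leq\varepsilon_{e}^{\max}$ that pins down the curvature signs, which the paper asserts without justification ("according to the normal distribution probability function") but implicitly assumes throughout.
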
\vspace{-7pt}\par
\begin{proof}
We first prove that $\varepsilon_{ab}$ is convex in $d_{r,1}$ and $\varepsilon_{ba}$ is convex in $d_{r,2}$:\vspace{-8pt}
\begin{equation}
\begin{aligned}
\frac{\partial^{2} \varepsilon_{ab}}{\partial d_{r,1}^{2}} 
&\!=\! -\!\left(\!\!\frac{\ln 2}{\sqrt{m_{1} V(\gamma_{ab})}}\!\right)^{2}  \underbrace{\!\!\!\chi_{ab}\left(d_{r,1}\right)}_{\geq 0}  \underbrace{\frac{\partial \varepsilon_{ab}}{\partial \chi_{ab}\left(d_{r,1}\right)}}_{\geq 0}\geq 0,\\[-7pt]
\end{aligned}
\end{equation}
where $\chi_{i}\left(d_{r,i}\right) = \sqrt{\frac{m_1}{V(\gamma_i)}}\left((C(\gamma_i) - \frac{d_1}{m_1})\ln 2\right)$ is an auxiliary function with $i \in \{ab,ae\}$. According to the normal distribution probability function, we have $\chi_{ab}\left(d_{r,1}\right) \geq 0$ and $\chi_{ae}\left(d_{r,1}\right) \leq 0$. 
The second derivative $\frac{\partial^{2} \varepsilon_{ba}}{\partial d_{r,2}^{2}}$ differs from $\frac{\partial^{2} \varepsilon_{ab}}{\partial d_{r,1}^{2}}$ only in the constant term. From Theorem~\ref{lemma:start_time0}, we have
$
{\partial^{2} \varepsilon_{ba}}/{\partial d_{r,2}^{2}} \geqslant 0.
$ Next, we prove that $\varepsilon_{ae}$ is concave in $d_{r,1}$ and $\varepsilon_{be}$ is concave in $d_{r,2}$:
\begin{equation}
\frac{\partial^{2} \varepsilon_{ae}}{\partial d_{r,1}^{2}} \!=   \!- \!\left( \! \!\frac{\ln 2}{\sqrt{m_1 V\left(\gamma_{ae}\right)}}\right)^{2}  \underbrace{ \! \! \!\chi_{ae}\left(d_{r,1}\right) }_{\leq 0}\underbrace{\frac{\partial \varepsilon_{ae}}{\partial \chi_{ae}\left(d_{r,1}\right)}}_{\geq 0}  \!\leq0.\\[-9pt]
\end{equation}
Following the same derivation, we obtain $\frac{\partial^{2} \varepsilon_{be}}{\partial d_{r,2}^{2}} \leq 0$.
We now analyze the behavior of $\varepsilon_{LF}$ with respect to $d_{r,1}$ and $d_{r,2}$:
\vspace{-5pt}
\begin{equation}
\begin{aligned}
\frac{\partial^{2} \varepsilon_{LF}}{\partial d_{r,1}^{2}} &=  \left(1-\varepsilon_{ba}\right) \varepsilon_{e,2}
\bigg[\underbrace{\frac{\partial^{2} \varepsilon_{ab}}{\partial d_{r,1}^{2}} \varepsilon_{e,1}}_{\geq 0}+ \underbrace{2 \frac{\partial \varepsilon_{ab}}{\partial d_{r,1}} \frac{\partial \varepsilon_{e,1}}{\partial d_{r,1}}}_{\geq 0}\\[-7pt]
&- \underbrace{\left(1-\varepsilon_{ab}\right) \frac{\partial^{2} \varepsilon_{e,1}}{\partial d_{r,1}^{2}}}_{\leq 0}\bigg] \geq 0.\\[-7pt]
\end{aligned}
\end{equation}
Similarly, we have $\frac{\partial^{2} \varepsilon_{LF}}{\partial d_{r,2}^{2}} \geq 0$.
\end{proof}\vspace{-7pt}\par
Lemma~\ref{lemma:start_time1} and Corollary~\ref{corollary:start_time2} establish the partial convexity of $\varepsilon_{LF}$ with respect to $m_1$, $d_{r,1}$, and $d_{r,2}$ individually. These analytical results enable us to solve (SP3) efficiently using BCD. The BCD algorithm iteratively optimizes one variable while fixing the others: we first fix $d_{r,1}$ and $d_{r,2}$ to solve a convex optimization problem in $m_1$, then fix $m_1$ and $d_{r,2}$ to solve for $d_{r,1}$, and finally fix $m_1$ and $d_{r,1}$ to solve for $d_{r,2}$. This iteration continues until convergence. The computational complexity is $O(k(m_{1}^{\max}+d_{r,1}^{\max}+d_{r,2}^{\max}))$, where $k$ is the number of iterations. 
Since the variables of original problem are integer, the solution obtained from the relaxed problem (SP3) yields real values. To reconstruct the integer solution, we compare the objective function at the integer neighbors of the relaxed solution and select the values that minimize the objective.\vspace{-7pt}\par

\subsection{Convex Approximation and Majorization-Minimization Algorithm}
While the BCD algorithm reduces computational complexity compared to integer programming, its iteration-level complexity remains proportional to the number of variables. With three optimization variables in (SP3), the overall complexity may still be prohibitive for practical implementation. To address this limitation, we leverage convex approximation techniques that enable the use of the majorization-minimization (MM) algorithm, which offers improved convergence properties and lower per-iteration complexity.

The key challenge in applying convex approximation lies in the multiplicative structure of $\varepsilon_{LF}$, which involves the product $(1-\varepsilon_{{ab}})\varepsilon_{ae}(1-\varepsilon_{{ba}})\varepsilon_{be}$. To facilitate convex analysis, we perform an objective function transformation. Note that minimizing $\varepsilon_{LF} = 1-(1-\varepsilon_{{ab}})\varepsilon_{ae}(1-\varepsilon_{{ba}})\varepsilon_{be}$ is equivalent to maximizing $(1-\varepsilon_{{ab}})\varepsilon_{ae}(1-\varepsilon_{{ba}})\varepsilon_{be}$. Since all error probabilities satisfy $\varepsilon \in (0,1)$, the maximization problem can be equivalently transformed into minimizing the reciprocal $\frac{1}{(1-\varepsilon_{{ab}})\varepsilon_{ae}(1-\varepsilon_{{ba}})\varepsilon_{be}}$. Accordingly, we reformulate the optimization problem as:
\vspace{-5pt}
\begin{mini!}[2]
    {_{m_{1}\!,d_{r,1}\!,d_{r,2}}}{f(m_1,d_{r,1},d_{r,2}) \!=\! \frac{1}{(1\!-\!\varepsilon_{{ab}})\varepsilon_{ae}(1\!-\!\varepsilon_{{ba}})\varepsilon_{be}}}
    {\label{problem:fourth}}{(\mathrm{SP4})\!:\!}
    %\addConstraint{\bar{\Delta}_i\leq \Delta_{\max},\forall i\in\mathcal{I}}
    \addConstraint{\eqref{con:SP31},  \eqref{con:SP32}}\label{con:SP41}
\end{mini!}\vspace{-18pt}\par
Note that at convergence, a local minimum of (SP4) corresponds to a local minimum of the original $\varepsilon_{LF}$. This transformation converts the multiplicative structure into a form amenable to convex approximation. To enable the MM algorithm, we establish the following upper bound based on the arithmetic-geometric mean inequality:
\begin{theorem}
    \label{lemma:start_time22}
Within the feasible set of (SP4), the function $f(m_1,d_{r,1},d_{r,2})$ can be upper-bounded by
\vspace{-8pt}
\begin{equation}
g(m_1,d_{r,1},d_{r,2})=\left( \frac{\frac{1}{1-\varepsilon_{{ab}}} + \frac{1}{\varepsilon_{ae}} + \frac{1}{1-\varepsilon_{{ba}}} + \frac{1}{\varepsilon_{be}}}{4} \right)^2\\[-7pt] 
\end{equation}
\end{theorem}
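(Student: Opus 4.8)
The plan is to obtain the bound $f \le g$ purely from the arithmetic--geometric mean (AM-GM) inequality applied to the four reciprocal factors that constitute $f$, so that no calculus or appeal to the earlier monotonicity arguments is needed for the inequality itself. First I would record the positivity facts that license AM-GM: on the feasible set of (SP4) every error probability obeys $\varepsilon_i \in (0,1)$ for $i \in \{ab,ae,ba,be\}$, hence each of the four quantities $\tfrac{1}{1-\varepsilon_{ab}}$, $\tfrac{1}{\varepsilon_{ae}}$, $\tfrac{1}{1-\varepsilon_{ba}}$, $\tfrac{1}{\varepsilon_{be}}$ is strictly positive. Writing $a,b,c,d$ for these four numbers, the objective is exactly their product $f = abcd$, while the proposed surrogate $g = \left(\tfrac{a+b+c+d}{4}\right)^2$ is a power of their arithmetic mean.

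With positivity established, the single engine of the proof is AM-GM on $a,b,c,d$: the arithmetic mean dominates the geometric mean, i.e. $\tfrac{a+b+c+d}{4} \ge (abcd)^{1/4} = f^{1/4}$. I would then pass from this comparison of means to a bound on $f$ written through the arithmetic mean, which is precisely the functional form of $g$; because both sides are nonnegative, taking the relevant power preserves the inequality. The manipulation is entirely algebraic once the four factors are identified as positive.

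The step I expect to be the main obstacle --- and the one I would verify most carefully --- is the exponent bookkeeping linking the geometric mean $f^{1/4}$ to the squared mean defining $g$. A literal four-term AM-GM delivers the fourth power of the mean, $f \le \left(\tfrac{a+b+c+d}{4}\right)^4$, whereas the statement asserts the bound with the second power; reconciling the two is the delicate point. Concretely, establishing $f \le g$ as written amounts to verifying $f^{1/2} \le \tfrac{a+b+c+d}{4}$ on the feasible set, and I would check that the chosen exponent keeps $g$ a genuine upper bound for $f$ there (and not merely for $\sqrt{f}$) before adopting $g$ as the majorizer.

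Finally, to make $g$ usable in the MM iteration I would confirm the structural properties beyond the inequality: that $g$ touches $f$ at the current iterate, so the surrogate is tight, and that $g$ is convex in $(m_1,d_{r,1},d_{r,2})$. The latter follows by combining the convexity/concavity of the individual $\varepsilon_i$ recorded in the preceding partial-convexity analysis (Corollary~\ref{corollary:start_time2}) with the fact that each reciprocal term is a convex composition, so that the squared arithmetic mean defining $g$ is convex and the resulting surrogate subproblem is tractable.
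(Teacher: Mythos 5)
Your approach is the same as the paper's --- its entire proof is one line, ``directly apply the arithmetic--geometric mean inequality with $0<\varepsilon_i<1$'' --- but the obstacle you flag as the delicate point is not merely delicate: it is fatal to the statement as written, and the verification you postpone cannot succeed. Writing $a=\tfrac{1}{1-\varepsilon_{ab}}$, $b=\tfrac{1}{\varepsilon_{ae}}$, $c=\tfrac{1}{1-\varepsilon_{ba}}$, $d=\tfrac{1}{\varepsilon_{be}}$, AM--GM gives $f=abcd\le\left(\tfrac{a+b+c+d}{4}\right)^{4}$, and descending to the square would require $f^{1/2}\le\tfrac{a+b+c+d}{4}$, which fails on the feasible set. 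Concretely, under the paper's own thresholds ($\varepsilon_{ab}^{\max}=\varepsilon_{ba}^{\max}=0.5$, $\varepsilon_{e}^{\max}=0.5$) the point $\varepsilon_{ab}=\varepsilon_{ae}=\varepsilon_{ba}=\varepsilon_{be}=\tfrac12$ is feasible and gives $f=16$ but $g=\left(\tfrac{2+2+2+2}{4}\right)^{2}=4$, so $f\le g$ is false there. This is systematic rather than accidental: on the feasible set every factor $a,b,c,d$ exceeds $1$ (each $\varepsilon_i$ and $1-\varepsilon_i$ lies in $(0,1)$, and the leakage/reliability thresholds cap them well below $1$), so the product tends to dominate the squared mean rather than be dominated by it. The theorem is salvageable only with the fourth power, $g=\left(\tfrac{a+b+c+d}{4}\right)^{4}$, which is exactly what a literal four-term AM--GM delivers and is almost certainly what the paper intended; your proposal proves that corrected bound completely, and correctly diagnoses that no proof of the stated square bound exists. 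You should state the counterexample and the corrected exponent explicitly rather than leaving it as a check to be performed.

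One further caution on your closing paragraph: the tightness property you say you would confirm also fails. AM--GM holds with equality only when $a=b=c=d$, so $g$ does not touch $f$ at an arbitrary iterate $(\hat d_{r,1},\hat d_{r,2})$; as defined, $g$ is a fixed global majorizer, not a surrogate tangent at the current point, and the paper's majorization chain (which asserts $f(d_{r,1}^{(k)},d_{r,2}^{(k)})=g^{(k)}(d_{r,1}^{(k)},d_{r,2}^{(k)})$) inherits this defect. A genuine MM scheme here needs either an iterate-dependent surrogate (e.g., a first-order tangency correction) or a weaker descent argument. Your convexity argument for $g$ in $(d_{r,1},d_{r,2})$ via compositions --- $\tfrac{1}{1-x}$ increasing convex composed with the convex $\varepsilon_{ab},\varepsilon_{ba}$, and $\tfrac{1}{x}$ decreasing convex composed with the concave $\varepsilon_{ae},\varepsilon_{be}$ --- is sound and in fact cleaner than the paper's Hessian computation, but it covers the $d$-variables only, consistent with Lemma~4, not joint convexity in $m_1$ as your last sentence suggests.
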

\begin{proof}
    This can be proven by directly apply the arithmetic-geometric mean inquality with the conditions that all error probabilities satisfy $0 < \varepsilon_i < 1$.
% By applying the arithmetic-geometric mean inequality, we have
% \vspace{-8pt}
% \begin{equation}
% \frac{1}{(1\!-\!\varepsilon_{{ab}})\varepsilon_{ae}(1\!-\!\varepsilon_{{ba}})\varepsilon_{be}} \!\!\leq\!\! \left( \!\!\frac{\frac{1}{1-\varepsilon_{{ab}}} \!+\! \frac{1}{\varepsilon_{ae}}\! + \!\frac{1}{1-\varepsilon_{{ba}}}\! +\! \frac{1}{\varepsilon_{be}}}{4} \!\!\right)^2 \\[-7pt]
% \end{equation}
%which holds since all error probabilities satisfy $0 < \varepsilon_{{ab}}, \varepsilon_{ae}, \varepsilon_{{ba}}, \varepsilon_{be} < 1$.
\end{proof}\vspace{-5pt}\par
This upper bound eliminates the multiplicative coupling between variables, thereby enabling efficient convex optimization. We next establish the joint convexity property that is essential for the MM algorithm:
\begin{theorem}
    \label{lemma:start_time4}
Within the feasible set of (SP4), the objective function $g(m_1,{d_{r,1}},{d_{r,2}})$ is jointly convex in ${d_{r,1}}$ and ${d_{r,2}}$.
\end{theorem}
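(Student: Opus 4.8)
The plan is to exploit the additively separable structure of the bracketed sum in $g$ and then finish with a single monotone–convex composition. Write $S := \tfrac{1}{1-\varepsilon_{ab}} + \tfrac{1}{\varepsilon_{ae}} + \tfrac{1}{1-\varepsilon_{ba}} + \tfrac{1}{\varepsilon_{be}}$, so that $g = S^2/16$. Since $\varepsilon_{ab}$ and $\varepsilon_{ae}$ depend on the optimization variables only through $d_1 = d_{m,1}+d_{r,1}$, while $\varepsilon_{ba}$ and $\varepsilon_{be}$ depend only on $d_2 = d_{m,2}+d_{r,2}$ (with $m_1$ held fixed), the sum splits as $S = A(d_{r,1}) + B(d_{r,2})$, where $A := \tfrac{1}{1-\varepsilon_{ab}} + \tfrac{1}{\varepsilon_{ae}}$ and $B := \tfrac{1}{1-\varepsilon_{ba}} + \tfrac{1}{\varepsilon_{be}}$. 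First I would prove that $A$ is convex in $d_{r,1}$ and $B$ convex in $d_{r,2}$; joint convexity of $S$ in $(d_{r,1},d_{r,2})$ then follows at once, because a sum of a convex function of one block of variables and a convex function of a disjoint block is jointly convex.

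For the reciprocal terms I would invoke the convexity/concavity already supplied by Corollary~\ref{corollary:start_time2}: $\varepsilon_{ab}$ is convex and $\varepsilon_{ae}$ concave in $d_{r,1}$, and symmetrically $\varepsilon_{ba}$ is convex and $\varepsilon_{be}$ concave in $d_{r,2}$. The clean argument is the standard composition rule: the map $t \mapsto 1/t$ is convex and nonincreasing on $(0,\infty)$, so composing it with a strictly positive concave function yields a convex function. Since every error probability lies in $(0,1)$ on the feasible set, the affine image $1-\varepsilon_{ab}$ is positive and concave (as $\varepsilon_{ab}$ is convex), hence $\tfrac{1}{1-\varepsilon_{ab}}$ is convex; likewise $\varepsilon_{ae}$ is positive and concave, hence $\tfrac{1}{\varepsilon_{ae}}$ is convex. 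Summing gives $A$ convex, and the identical reasoning on the backward link gives $B$ convex. (Equivalently, one could differentiate twice, e.g. $\big(\tfrac{1}{1-\varepsilon_{ab}}\big)'' = \tfrac{\varepsilon_{ab}''(1-\varepsilon_{ab}) + 2(\varepsilon_{ab}')^2}{(1-\varepsilon_{ab})^3}\ge 0$, but the composition route avoids the algebra.)

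With $S$ jointly convex established, it remains to pass from $S$ to $g = S^2/16$. The delicate point is that squaring does not preserve convexity in general; it does so only for nonnegative functions. Here each summand of $S$ is the reciprocal of a strictly positive quantity, so $S > 0$ throughout the feasible set of (SP4), and composing the jointly convex $S$ with $\phi(t)=t^2$, which is convex and nondecreasing on $[0,\infty)$, yields the jointly convex $g$. I expect this positivity bookkeeping to be the only real obstacle: one must confirm that every denominator stays strictly positive (i.e.\ $\varepsilon_i \in (0,1)$ on the whole feasible region, which is precisely what the leakage and reliability constraints~\eqref{con:SP21}--\eqref{con:SP22} guarantee) so that both the reciprocal and the square compositions are legitimate. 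The convexity and concavity of the individual $\varepsilon_i$ themselves are already provided by Corollary~\ref{corollary:start_time2}, so no further curvature analysis of the error probabilities is needed.
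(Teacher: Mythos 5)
Your proof is correct, and it takes a genuinely different route from the paper's. The paper argues via the Hessian: it first computes $\partial^2 g/\partial d_{r,j}^2 \geq 0$ for each $j$ by brute-force differentiation, then writes down the $2\times 2$ Hessian in $(d_{r,1},d_{r,2})$ and asserts that its determinant and leading principal minors are nonnegative, hence positive semi-definiteness and joint convexity. You instead exploit structure: with $m_1$ fixed, $S = \tfrac{1}{1-\varepsilon_{ab}}+\tfrac{1}{\varepsilon_{ae}}+\tfrac{1}{1-\varepsilon_{ba}}+\tfrac{1}{\varepsilon_{be}}$ splits as $A(d_{r,1})+B(d_{r,2})$, each block is convex by the monotone-composition rule ($t\mapsto 1/t$ convex nonincreasing on $(0,\infty)$ applied to the positive concave functions $1-\varepsilon_{ab}$, $\varepsilon_{ae}$, etc.), block-separable sums are jointly convex, and finally $g = S^2/16$ is convex because $\phi(t)=t^2$ is convex nondecreasing on $[0,\infty)$ and $S>0$. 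Your route buys rigor and transparency: the paper's determinant display contains index typos (e.g.\ $\partial D/\partial d_{r,1}$ where $\partial D/\partial d_{r,2}$ is meant) and its concluding claim about the principal minors is asserted rather than shown, whereas your composition argument makes clear \emph{why} joint convexity holds --- the only coupling between $d_{r,1}$ and $d_{r,2}$ enters through the outer square of a positive, block-separable convex sum. What the paper's computation buys in exchange is self-containedness (no composition theorems invoked), and implicitly it is verifying exactly the same separability you use, since the cross term of its Hessian is precisely $(A'+B')(C'+D')/8$ in your notation. One bookkeeping caveat: the curvature facts you import ($\varepsilon_{ab}$, $\varepsilon_{ba}$ convex and $\varepsilon_{ae}$, $\varepsilon_{be}$ concave in the respective $d_{r,j}$) are established inside the \emph{proof} of Corollary~\ref{corollary:start_time2}, not in its statement (which concerns $\varepsilon_{LF}$), so you should cite those intermediate derivative computations explicitly; and the strict bounds $\varepsilon_i\in(0,1)$ you need hold because each $\varepsilon_i$ is a Q-function evaluated at a finite argument, independently of constraints~\eqref{con:SP21}--\eqref{con:SP22}.
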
\vspace{-7pt}\par
\begin{proof}
First, we need to prove that the function is convex in ${d_{r,1}},{d_{r,2}}$, respectively.
\vspace{-7pt}
\begin{equation}
\begin{aligned}
\frac{\partial^2 g}{\partial d_{r_j}^2} & = \frac{1}{8} \bigg[ 
\left( \frac{\partial \frac{1}{1 - \varepsilon_{ab}} }{\partial r_j} + \frac{\partial \frac{1}{\varepsilon_{ae}}}{\partial r_j} \right)^2 + \left( \frac{1}{1 - \varepsilon_{ab}} + \frac{1}{\varepsilon_{ae}}\right. \\[-5pt]
& \left.  + \frac{1}{1 - \varepsilon_{ba}} + \frac{1}{\varepsilon_{be}} \right) \left( \frac{\partial^2\frac{1}{1 - \varepsilon_{ab}} }{\partial r_j^2} + \frac{\partial^2 \frac{1}{\varepsilon_{ae}}}{\partial r_j^2} \right) \bigg] \geq 0\\[-7pt]
\end{aligned}
\end{equation}
where $j\in(1,2)$. Next, we construct the Hessian matrix of this function in $d_{r,1}$ and $d_{r,2}$ to prove it has joint convexity, i.e., to prove that the Hessian matrix is positive semi-definite:
\vspace{-5pt}
\begin{equation}
H = \begin{pmatrix}
\frac{\partial^2 g}{\partial {d_{r,1}}^2} & \frac{\partial^2 g}{\partial {d_{r,1}} \partial {d_{r,2}}} \\
\frac{\partial^2 g}{\partial {d_{r,1}} \partial {d_{r,2}}} & \frac{\partial^2 g}{\partial {d_{r,2}}^2}
\end{pmatrix}
\vspace{-5pt}
\end{equation}
\begin{equation}
\begin{aligned}
\| H \| 
&= \frac{1}{64}\! (A \!+ \!B \!+ \!C\! + \!D)\! \left[ \!\left( \!\frac{\partial A}{\partial {d_{r,1}}}\! +\! \frac{\partial B}{\partial {d_{r,1}}} \!\right)\! \left(\!\frac{\partial C}{\partial {d_{r,2}}}\! + \!\frac{\partial D}{\partial {d_{r,1}}}\! \right)^2 \right. \\[-5pt]
&+\! \left(\! \frac{\partial C}{\partial {d_{r,1}}}\! + \!\frac{\partial D}{\partial {d_{r,1}}} \!\right)\! \left(\! \frac{\partial A}{\partial {d_{r,1}}}\! + \frac{\partial B}{\partial {d_{r,1}}} \!\right) ^2 \!\!\!\!+ \!(A\! +\! B+ \!C \\[-5pt]
& +\! D)\! \left(\! \frac{\partial^2 A}{\partial {d_{r,1}}^2} \!+\! \frac{\partial^2 B}{\partial {d_{r,1}}^2} \!\right)\! \!\left( \!\frac{\partial^2 C}{\partial {d_{r,2}}^2} + \frac{\partial^2 D}{\partial {d_{r,2}}^2}\! \right)\! \Bigg] \!\geq\!0\\[-7pt]
\end{aligned}
\end{equation}
where $A = \frac{1}{1 - \varepsilon_{{ab}}}$, $B = \frac{1}{\varepsilon_{ae}}$, $C = \frac{1}{1 - \varepsilon_{{ba}}}$, and $D = \frac{1}{\varepsilon_{be}}$ are auxiliary functions. It is not difficult to find that all leading principal minors of the Hessian matrix are non-negative; thus, we conclude that $g$ is jointly convex in ${d_{r,1}}$ and ${d_{r,2}}$.
\end{proof}\vspace{-6pt}
The joint convexity of ${d_{r,1}}$ and ${d_{r,2}}$ established above enables the application of the MM algorithm. Based on Theorems~\ref{lemma:start_time22} and ~\ref{lemma:start_time4}, we formulate the MM algorithm by constructing a surrogate function at a given local point $(m,\hat{d}_{r,1},\hat{d}_{r,2})$. Specifically, the problem in (SP4) can be reformulated as:
\vspace{-7pt}
\begin{mini!}[2]
    {_{m_{1},d_{r,1},d_{r,2}}}{g(m_1,d_{r,1},d_{r,2}|\hat d_{r,1},\hat d_{r,2})}
    {\label{problem:fifth}}{(\mathrm{SP5}):}
     \addConstraint{\eqref{con:SP31}, \eqref{con:SP32}}
\end{mini!}\vspace{-17pt}\par
The MM algorithm operates by iteratively minimizing the surrogate function $g$ instead of the original function $f$. In each iteration, we treat $m_1$ as fixed and optimize the block coordinates $(d_{r,1}, d_{r,2})$ jointly using the convex surrogate. The convergence of this approach is guaranteed by the following majorization chain:
\vspace{-7pt}
\begin{equation}
\begin{aligned}
f\!\left(d_{r,1}^{(k)}, d_{r,2}^{(k)}\right) &= g^{(k)}\left( d_{r,1}^{(k)}, d_{r,2}^{(k)}\right) \geq g^{(k)}\left( d_{r,1,\text{opt}}^{(k)}, d_{r,2,\text{opt}}^{(k)}\right) \\[-3pt]
&  \geq \! f^{(k+1)}\!\left(\!d_{r,1}^{(k+1)}, d_{r,2}^{(k+1)}\!\right)\! =\! g^{(k+1)}\!\left(\!d_{r,1}^{(k+1)}, d_{r,2}^{(k+1)}\!\right) \\[-3pt]
&\geq g^{(k+1)}\left(m_{1}, d_{r,1,\text{opt}}^{(k+1)}, d_{r,2,\text{opt}}^{(k+1)}\right)\\[-7pt]
\end{aligned}
\end{equation}

The overall algorithm combines BCD and MM in a nested structure. The outer BCD loop alternates between optimizing $m_1$ (block 1) and jointly optimizing $(d_{r,1}, d_{r,2})$ (block 2). For block 2, the inner MM loop iteratively minimizes the convex surrogate function until convergence. The majorization chain ensures that each MM iteration monotonically decreases the objective function, while the BCD framework guarantees convergence to a stationary point. This nested BCD-MM approach significantly reduces computational complexity while maintaining solution quality.\vspace{-5pt}
\section{Numerical Simulations}\vspace{-3pt}
In this section, we evaluate our analytical results and demonstrate the advantages of the proposed method through numerical benchmark comparisons. Unless otherwise specified, the simulations are following: the normalized transmit power is set to \( p = 1 \, \text{W} \), and the total available block length is \( M = 1000 \). The packet size of the confidence message is \( d_m = 20 \). We set the reliability and leakage thresholds to $\varepsilon_{ab}^{\max}=1-\varepsilon_{ae}^{\min}=0.5 $, $ \varepsilon_{ba}^{\max}=1-\varepsilon_{be}^{\min}=0.5$.
\begin{figure}[t!]
    \centering
    \includegraphics[width=0.8\linewidth,trim = 0 0 0 50]{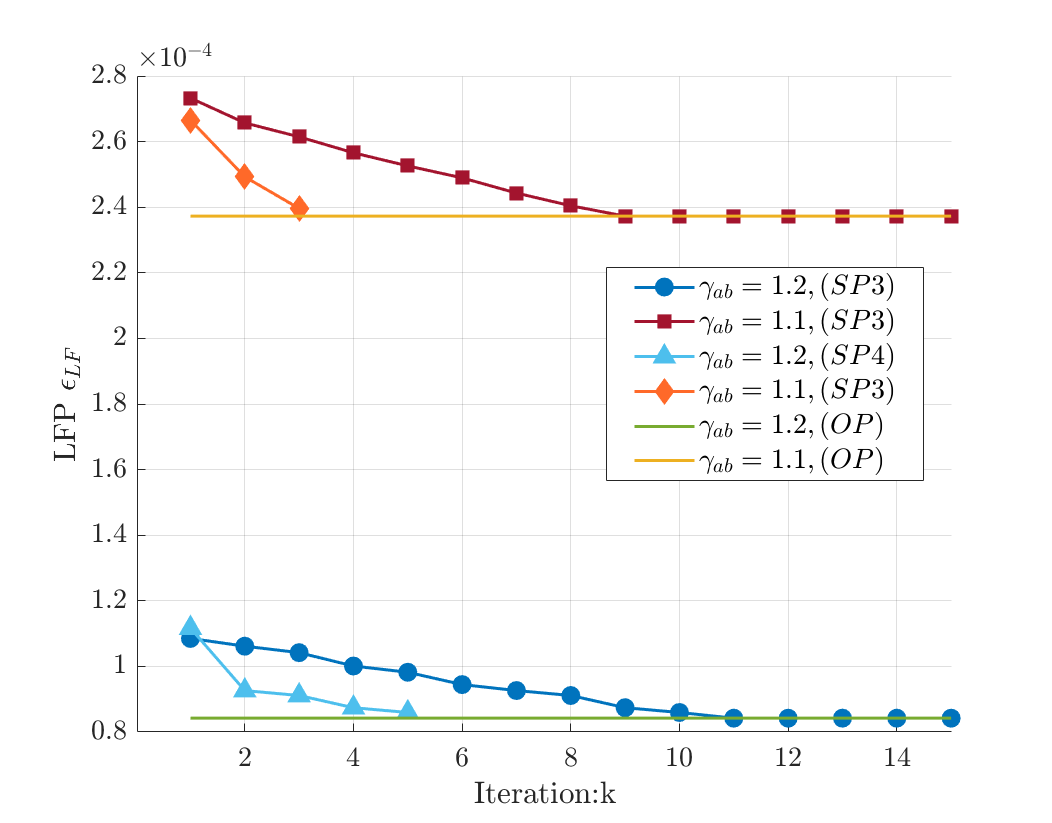}
    \caption{Obtained LFP $\varepsilon^{(k)}_{\text{LF}}$ in each $k$-th iteration for solving (SP2) with iterative search and under forward variant SNR of Bob $\varepsilon_{ab} = \{1, 1.2\}$. Moreover, the globally optimal results $\varepsilon^*_{\text{LF}}$ obtained with (OP) via integer programming are also shown as a benchmark.}
    \label{fig:fangzhen1}
\end{figure}

As discussed in Section~\ref{sec:system analyis}, (SP3) must be solved via iterative search. We plot the $\varepsilon_{LF}$ in each iteration \( k\)in Fig.~\ref{fig:fangzhen1}. To evaluate the algorithm's performance, we compare its results with those obtained by exhaustive search for (OP). Since the results from exhaustive search are guaranteed to be globally optimal, we can treat them as the lower bound. We can observe that the iterative results converge at a sublinear rate, which verifies our analysis. Meanwhile, it can obtain near-optimal solutions under various settings.We also plotted the iteration results obtained by (SP4) in Fig.~\ref{fig:fangzhen1} to verify the performance of the convex approximation algorithm in (SP5). It is easy to observe from the figure that the (SP4) algorithm can also work under various scenario settings, and it requires fewer iterations to achieve convergence.
\begin{figure}[t!]
    \centering
    \includegraphics[width=0.8\linewidth,trim = 0 0 0 50]{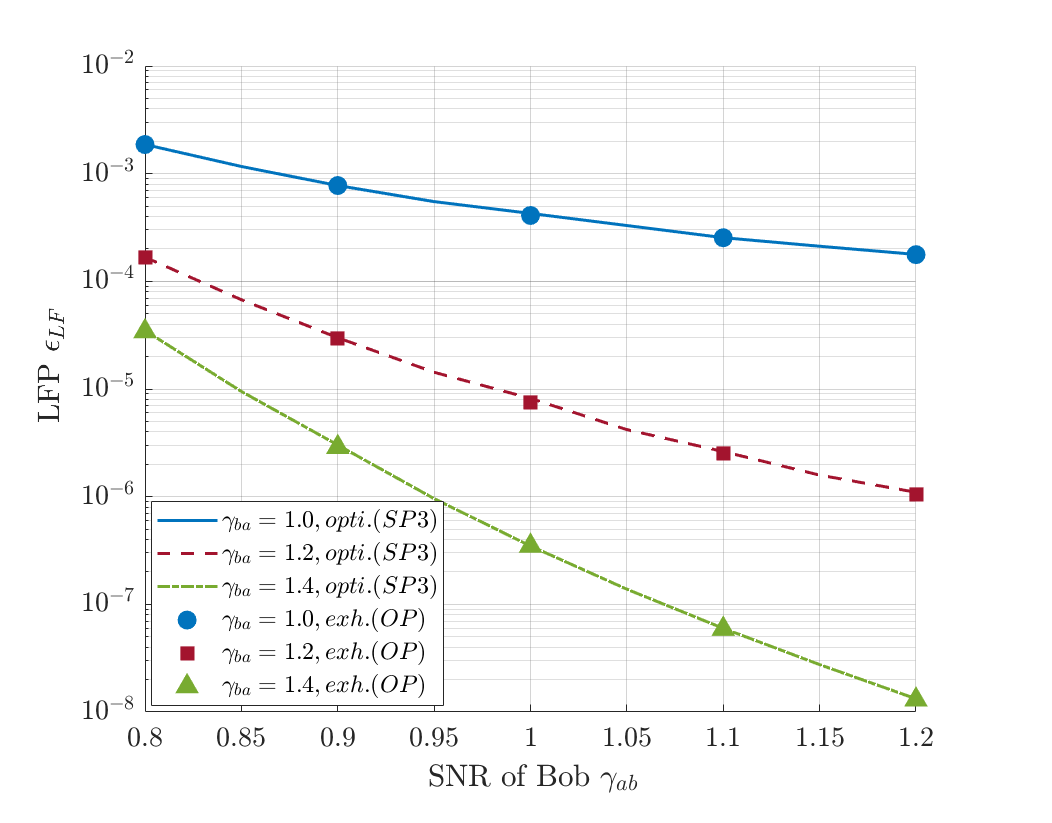}
    \caption{Minimized LFP $\varepsilon_{LF}$ against Bob’s SNR $\gamma_{ba}$ under various setups of Alice’s SNR $\gamma_{ab}$. The results obtained by solving (SP3), solving (OP), as well as with the assumption of the IBL are shown.}
    \label{fig:fangzhen2}
\end{figure}

Fig.~\ref{fig:fangzhen2} plots the relationship between the minimum LFP $\varepsilon_{LF}$ and Alice's SNR $\gamma_{{ab}}$ under different SNR settings of Bob $\gamma_{{ba}}$, aiming to evaluate the performance of (SP3). The iterative results of the BCD method (SP3) are labelled as opti., are compared with the results obtained by (OP), which are labelled as exh. The results show that the BCD method closely associated with the global optimal solution and performs well under various environmental conditions. This proves that (SP3) have a strong guiding design ability for the practical application.

\begin{figure}[t!]
    \centering
    \includegraphics[width=0.8\linewidth,trim = 0 0 0 50]{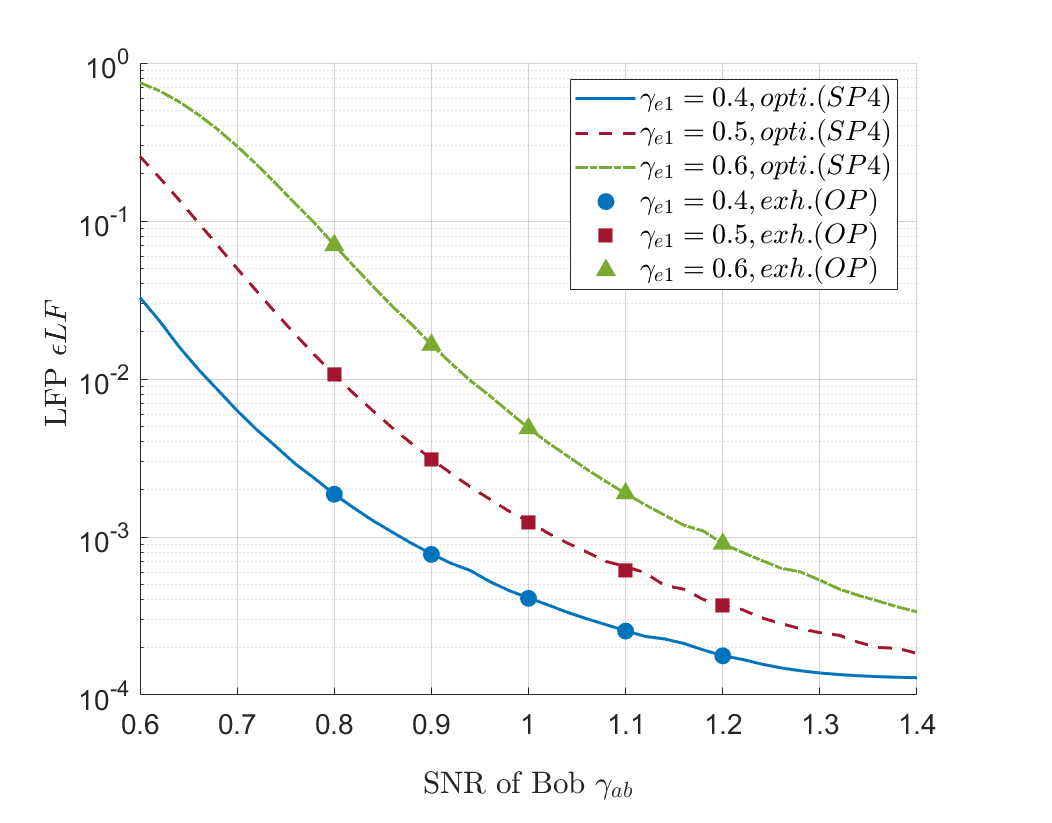}
    \caption{Minimized LFP $\varepsilon_{LF}$ against Bob’s SNR $\gamma_{ba}$ under various setups of Eve’s SNR $\gamma_{ae}$. The results obtained by solving (SP4), solving (OP), as well as with the assumption of the IBL are shown.}
    \label{fig:fangzhen3}
\end{figure}

Also, Fig.~\ref{fig:fangzhen3} plots the relationship between the LFP $\varepsilon_{LF}$ and Eve's SNR $\gamma_{ae}$ under different SNR settings of Bob $\gamma_{ab}$, which is used to evaluate the performance of (SP4) and (SP5). For specific points, (SP5) adopts the convex approximation MM algorithm  for iterative search. The results obtained by (SP4) are marked as opti.(SP4). Similarly, these results are compared with those obtained by (OP). It is easy to observe that (SP4) also exhibits eminent performance in searching for the global optimal solution, and is not falling into local optimal solutions. MM algorithm  of (SP4) maintains stable under various environmental parameter settings, providing application capabilities for practical use.

\begin{figure}[t!]
    \centering
    \includegraphics[width=0.8\linewidth,trim = 0 0 0 50]{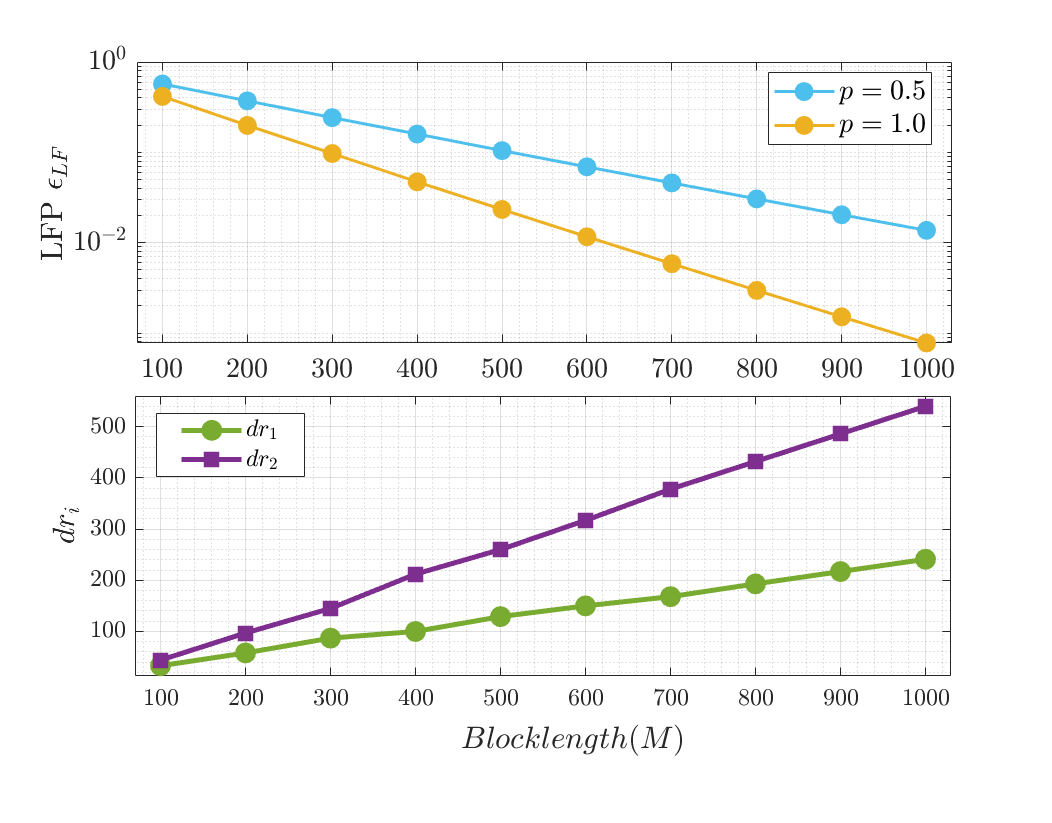}
    \caption{Minimized LFP $\varepsilon_{LF}$ and its corresponding redundant bits $d_{ri}$ against the available blocklength M under various setups of transmit power p.}
    \label{fig:fangzhen4}
\end{figure}

As shown in Fig.~\ref{fig:fangzhen4}, an increase in \( M \) reduces the LFP $\varepsilon_{LF}$, while it causes a compensating increase in $d_{ri}$. This indicates that an improvement in reliable safety is inevitably accompanied by a higher load size \( d_r \), which is consistent with our theoretical analysis. In essence, the enhancement of the joint secure-reliability is achieved by sacrificing a slight degree of reliability to gain a significant improvement in safety, which aligns with pracitical application scenarios. With \(M\) increasing, this overall improvement is further strengthened.\vspace{-7pt}\par
\section{Conclusion}
\label{sec:conclusion}
In this work, we investigate the impact of redundancy on PLS performance and FBL codes. We successively propose two efficient optimization methods for redundancy bits and blocklength. After formulating the certificate optimization problem aimed at minimizing the LFP, we first analyze and eliminate one variable, converting the problem into separate relaxed convex optimization problems in three variables. Subsequently,by proofing the joint convex property of two variables, we solve the problem using a convex approximation method. Both the two methods are applicable to various environmental scenario settings, revealing the key trade-off between reliability and safety under practical applications, and are consistent with theoretical analyses.
\vspace{-7pt}\par
\bibliographystyle{IEEEtran}
\bibliography{main}

\end{document}